\documentclass[hidelinks, a4]{article}
\usepackage[utf8]{inputenc}
\usepackage[english]{babel}
\usepackage[T1]{fontenc}

\usepackage{mathtools}
\usepackage{amsmath, amsthm, amssymb,amsfonts}
\usepackage{thmtools,thm-restate}

\usepackage{hyperref}
\usepackage[nameinlink,noabbrev]{cleveref}

\usepackage{graphicx}
\usepackage{float}
\usepackage{caption}
\usepackage[shortlabels]{enumitem}

\usepackage{natbib} 
\usepackage[a4paper, total={16cm, 22cm}]{geometry}

\newtheorem{theorem}{Theorem}
\newtheorem{proposition}[theorem]{Proposition}
\newtheorem{remark}[theorem]{Remark}
\newtheorem{definition}[theorem]{Definition}

\newtheorem{example}{Example}

\newcommand{\BE}{{\mathbb{E}}}

\newcommand{\BP}{{\mathbb{P}}}
\newcommand{\BQ}{{\mathbb{Q}}}
\newcommand{\BR}{{\mathbb{R}}}

\newcommand{\CB}{{\cal B}}

\newcommand{\CE}{{\cal E}}

\newcommand{\CN}{{\cal N}}

\newcommand{\CQ}{{\cal Q}}

\newcommand{\mX}{{{m_x}}}
\newcommand{\mH}{{{m_h}}}
\newcommand{\mA}{{{m_A}}}

\newcommand{\dd}{{\mathrm{d}}}

\DeclareMathOperator{\Var}{Var} 
\DeclareMathOperator{\Cov}{Cov}

\DeclareMathOperator*{\argmin}{argmin}

\DeclareMathOperator{\cvar}{CVaR}

\newcommand{\Id}{\mathrm{Id}}

\let\phi\varphi
\let\theta\vartheta
\let\epsilon\varepsilon

\let\emptyset\varnothing

\let\le\leqslant
\let\ge\geqslant
\let\leq\leqslant
\let\geq\geqslant

\def\keywordname{{\bfseries \emph{Keywords}}}%
\def\keywords#1{\par\addvspace\medskipamount{\rightskip=0pt plus1cm
\def\and{\ifhmode\unskip\nobreak\fi\ $\cdot$
}\noindent\keywordname\enspace\ignorespaces#1\par}}

\title{CVaR anchor regression protects against rare shifts}
\date{September 2026}
\author{Malte Londschien}
\begin{document}

\maketitle

\begin{abstract}
\noindent We study prediction in new environments when training data contain rare, large shifts.
Anchor regression penalizes the average of the squared mean residual across environments. It protects against shifts in an ellipsoid determined by the second moment of the training shifts.
Covering rare shifts may therefore require a large penalty, expanding the ellipsoid in every direction and reducing accuracy on common environments.
We propose CVaR anchor regression, which replaces the average of the squared mean residuals with a tail average.
Unlike CVaR or GroupDRO applied directly to prediction risks, it does not give
environments more weight solely because their noise levels are high.
We prove an exact worst-case risk guarantee under a linear structural model that allows for heteroscedastic noise.
For discrete environments, decreasing the CVaR tail fraction expands the robustness set from an ellipsoid to a scaled convex hull of the training shifts and their negatives.
A separate parameter controls its scale.
Examples show how the method can improve protection against rare shifts while retaining accuracy on common environments.
We illustrate the method on New York City taxi data.
\end{abstract}
\section{Introduction}

Predictive models are often deployed on data whose distribution differs from the training distribution.
We are particularly interested in the setting where the training data come from many discrete environments (for example, patients from different hospitals, observations from different days or regions) and predictions must extrapolate to new, possibly more extreme environments.
In such settings, performance is often assessed using upper quantiles or tail averages of the environment-wise prediction risk, rather than the mean alone \citep{rothenhausler2021anchor,koh2021wilds,eastwood2022probable}.

Distributionally robust optimization (DRO) minimizes the worst-case risk over a prescribed set of distributions \citep{bental2013robust,duchi2021learning}.
Applied to environment-wise risks, one choice is to minimize their conditional value at risk \citep[CVaR,][]{rockafellar2000optimization}, also studied as superquantile risk minimization \citep[Appendix~F]{eastwood2022probable}.
This is the average risk in the worst $\alpha$ fraction of environments, accounting for their relative weights.
As $\alpha \to 0$, this becomes GroupDRO, which minimizes the largest environment-wise risk \citep{sagawa2020distributionally}.
Other proposals target a quantile of the environment-wise risks \citep{eastwood2022probable} or penalize their variance \citep[V-Rex,][]{krueger2021out}.

As these criteria act on total environment-wise risks, they can emphasize environments with high noise levels \citep{oren2019distributionally,agarwal2022minimax,wang2023distributionally}.
To see why, decompose the environment-wise risk as
$$
R_e(b) := \BE[(Y - X^Tb)^2 \mid \text{environment } e] = \BE[Y - X^Tb \mid e]^2 + \Var[Y - X^Tb \mid e] = \mu_e(b)^2 + \sigma_e^2(b).
$$
The squared mean residual $\mu_e(b)^2$ reflects how strongly environment $e$ is shifted. The variance $\sigma_e^2(b)$ reflects the noise level in environment $e$.
If the noise levels $\sigma_e^2(b)$ vary between environments, reweighting environments according to their total risk emphasizes noisy environments, inflating variance without buying robustness.

Anchor regression \citep{rothenhausler2021anchor} treats the two components separately.
For discrete environments $e \in \CE$ with relative weights $w_e$, it is given by
$$
b_\mathrm{anchor}^\gamma := \argmin_b \sum_{e \in \CE} w_e \sigma_e^2(b) + \gamma \sum_{e \in \CE} w_e \mu_e(b)^2.
$$
Increasing $\gamma$ strengthens the penalty on squared mean residuals while leaving the residual variance term unchanged.
Under our structural model, this protects against an ellipsoid of deterministic shifts whose shape is determined by the second moment of the shifts in the training data.
This guarantee also holds with heteroscedastic noise (\cref{thm:anchor-regression}). The ellipsoid scales by $\sqrt{\gamma}$ without changing its shape (first panel of \cref{fig:example_1}).

In practice, the distribution of shifts might not be elliptical and their shape not be well captured by their second moment.
When rare training shifts lie far outside the anchor ellipsoid, covering them may require a large $\gamma$, expanding the ellipsoid in every direction and reducing performance on common environments.
We combine DRO reweighting with anchor regression by replacing the average of the squared mean residuals with their CVaR.
This gives CVaR anchor regression:
\begin{align*}
b_\mathrm{cvar-anchor}^{\gamma, \alpha} &:= \argmin_b \sum_{e \in \CE} w_e \sigma_e^2(b) + \gamma \cvar_\alpha(\mu_e(b)^2) \\
&= \argmin_b \sum_{e \in \CE} w_e \sigma_e^2(b) + \gamma \max \big\{ \sum_{e \in \CE} q_e \mu_e(b)^2 \mid \sum_{e \in \CE} q_e = 1, 0 \leq q_e \leq w_e / \alpha  \big\} 
\end{align*}
For $\alpha = 1$, this recovers ordinary anchor regression.
As in anchor regression, $\gamma$ controls the scale of the robustness set.
Decreasing $\alpha$ enlarges the set beyond the anchor regression ellipsoid.
For finitely many environments and sufficiently small $\alpha$, it equals the scaled convex hull of the training shifts and their negatives, and contains every training shift when $\gamma \geq 1$.

\begin{example}
    \label{ex:1}
Let $X \in \BR^2$, $Y \in \BR$, with hidden confounder $H \in \BR^2$:
$$
X = \delta_e + H + \varepsilon_X, \qquad Y = H_1 + H_2 + \varepsilon_Y, \qquad H, \varepsilon_X \sim \CN(0, \Id_2), \ \ \varepsilon_Y \sim \CN(0, \sigma_e^2).
$$
The training distribution is $P_\mathrm{train} = 0.9 P_1 + 0.1 P_2$, with $\delta_e = \pm(1, 0)^T$ and $\sigma_e^2 = 2$ for $e \in P_1$ and $\delta_e = \pm(0, 3)^T$ and $\sigma_e^2 = 1$ for $e \in P_2$.
The left panel of \Cref{fig:example_1} shows the robustness sets of CVaR anchor regression for varying $\gamma, \alpha$.
The second moment of shifts $\delta_e$ is $0.9 \, \Id_2$, so covering the rare shifts in $P_2$ with anchor regression requires $\gamma \geq 10$.
In contrast, CVaR anchor regression with $\alpha = 0.1$ covers the rare shifts with $\gamma = 1$.
The right panel shows the estimators' risks on $P_1$ and $P_2$.
\end{example}

\begin{figure}[H]
    \centering
    \includegraphics[width=0.8\textwidth]{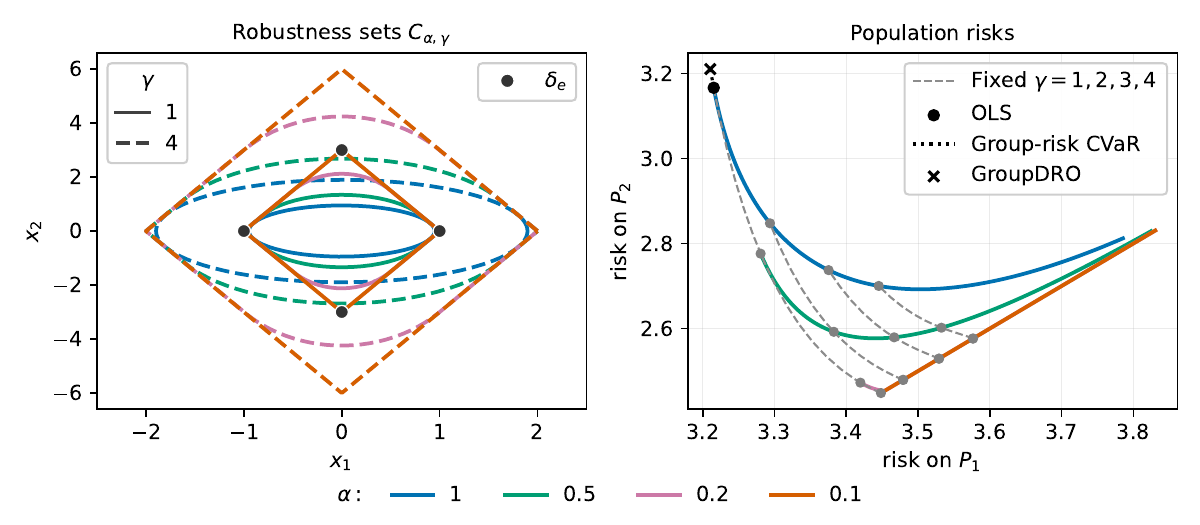}
    \caption{\label{fig:example_1}Robustness sets and population risks of \cref{ex:1}.}
\end{figure}
\paragraph{Related work.}
Other approaches counteract the dependence on noise levels by comparing the risk in each environment with a reference risk that does not depend on the predictor.
Maximin effects maximize the smallest explained variance \citep{meinshausen2015maximin}, later extended to general machine learning models by \citet{wang2023distributionally}.
Minimax regret compares with the best predictor for each distribution \citep{agarwal2022minimax,mo2024minimax}.
Such reference risks cancel predictor-independent contributions to risk, such as the variance of independent additive response noise under squared loss.
In our structural model, however, differences in residual variance between environments can depend on the predictor, so a fixed reference risk does not remove them.
CVaR anchor regression instead reweights only the squared mean residuals and retains the average residual variance in its objective.

\section{Population formulation and shift robustness}
We follow the notation of \citet{rothenhausler2021anchor}.
Let $B \in \BR^{(\mX + \mH + 1) \times (\mX + \mH + 1)}$ and $M \in \BR^{(\mX + \mH + 1) \times (\mA)}$ be fixed with $\mathrm{Id} - B$ invertible.
Let $\varepsilon \in \BR^{\mX + \mH + 1}$ and $A \in \BR^\mA$ be random vectors with finite second moments, $\BE[\varepsilon] = 0$, and $\Cov[\varepsilon, A] = 0$.
Let the distribution of $(X, Y, H, A)$ under $\BP_\mathrm{train}$ be a solution of
\begin{equation}\label{eq:structural}
\begin{pmatrix}
X \\
Y \\
H
\end{pmatrix}
=
B \cdot \begin{pmatrix}
X \\
Y \\
H
\end{pmatrix} + \varepsilon + M A \quad \Leftrightarrow \quad \begin{pmatrix}
X \\
Y \\
H
\end{pmatrix} = (\mathrm{Id} - B)^{-1} (\varepsilon + M A).
\end{equation}
We are interested in the risk under interventions on $A$.
For possibly random $v$ independent of $\varepsilon$, let $\BP_v$ be the distribution of $(X, Y, H)$ a solution of \cref{eq:structural} with $M A$ replaced by $v$.
The law of $\varepsilon$ under $\BP_v$ is the same as under $\BP_\mathrm{train}$, that is, $\varepsilon$ is marginalized over $A$.
Write $P_A Z := \Cov(Z, A)\Cov(A)^{-1}A$ for the linear projection of $Z$ onto $A$ and $M_A Z = Z - P_A Z$.
Unless another law is specified, expectations and CVaR are taken under the training distribution.

\subsection{Anchor regression for heteroscedastic noise}
\begin{theorem}[Anchor regression, Theorem 1 of \citet{rothenhausler2021anchor} with uncorrelated, instead of independent $\varepsilon$ and $A$] \label{thm:anchor-regression}
Let $\gamma \geq 0$, $b \in \BR^\mX$, and define $C_\gamma := \{ v \mid v v^T \preceq \gamma M \BE_{P_\mathrm{train}}[A A^T] M^T \}$.
Under the model of \cref{eq:structural},
$$
\BE[(M_A (Y - X^T b))^2] + \gamma \BE[ (P_A (Y - X^T b))^2] = \sup_{v \in C_\gamma} \BE_v[ (Y - X^T b)^2].
$$
\end{theorem}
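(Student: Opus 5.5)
Write $R := Y - X^T b$. From the reduced form in \cref{eq:structural}, $R = w^T(\varepsilon + MA)$ with $w^T := (-b^T, 1, 0)(\mathrm{Id} - B)^{-1}$, a fixed vector depending only on $b$ and $B$. Likewise, under $\BP_v$ the residual is $R_v = w^T \varepsilon + w^T v$. The plan is to compute both sides of the claimed identity as explicit quadratic forms in $w$ and check that they agree.

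For the left-hand side, I first identify the projections. Since $\Cov(\varepsilon, A) = 0$, we have $\Cov(R, A) = w^T M \Cov(A, A)$. Hence
\[
P_A R = w^T M \Cov(A,A)\Cov(A)^{-1} A = w^T M A
\quad\text{and}\quad
M_A R = w^T \varepsilon .
\]
This is the only place where uncorrelatedness, rather than independence, enters. The left-hand side then equals
\[
\BE[(w^T\varepsilon)^2] + \gamma\, w^T M \BE[AA^T] M^T w .
\]
No cross term appears because each summand is squared separately.

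One point needs care. The projection is defined with $\Cov(A)$, while the target uses $\BE[AA^T]$. With the definition as written, $P_A R = w^T M A$ holds exactly, so the second term is $\gamma\BE[(w^TMA)^2] = \gamma\, w^T M \BE[AA^T] M^T w$ regardless of the mean of $A$. We do need $\Cov(\varepsilon, A) = 0$ together with $\BE[\varepsilon] = 0$. I would double-check this, but it appears consistent.

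For the right-hand side, take $v$ independent of $\varepsilon$, possibly random. Then
\[
\BE_v[R_v^2] = \BE[(w^T\varepsilon)^2] + 2\,\BE[w^T\varepsilon]\,\BE[w^T v] + \BE[(w^T v)^2],
\]
and the cross term vanishes because $\BE[\varepsilon] = 0$ and $v$ is independent of $\varepsilon$. Since the law of $\varepsilon$ under $\BP_v$ matches the training law, the first term equals the one in the left-hand side. It therefore remains to show that
\[
\sup_{v \in C_\gamma} \BE[(w^T v)^2] = \gamma\, w^T M \BE[AA^T] M^T w .
\]

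For the upper bound, I read $vv^T \preceq S$ (with $S := \gamma M\BE[AA^T]M^T$) as holding almost surely, or in expectation, $\BE[vv^T] \preceq S$, for random $v$. In either reading, $(w^Tv)^2 = w^T vv^T w \le w^T S w$, and taking expectations gives the bound. For attainment, I choose a deterministic $v = S^{1/2} u$ with $\|u\| = 1$ and $u$ parallel to $S^{1/2} w$ (any unit $u$ if $S^{1/2}w = 0$). Then $vv^T = S^{1/2}uu^TS^{1/2} \preceq S$, and $(w^Tv)^2 = \|S^{1/2}w\|^2 = w^T S w$. Alternatively, $v = \sqrt{\gamma}\, M A'$ with $A'$ an independent copy of $A$ attains the bound in the expectation reading.

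The main obstacle I expect is bookkeeping rather than depth. I need to confirm that the projection onto $A$ with $\Cov(A)^{-1}$ gives exactly $w^T M A$ under the stated assumptions, and that the semidefinite constraint in $C_\gamma$ is interpreted so that both random and deterministic $v$ are handled.
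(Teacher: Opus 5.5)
Your proof is correct and follows the paper's argument essentially step for step. The paper proves the CVaR theorem and specializes to $\alpha = 1$, where $\CQ_1 = \{\BP_\mathrm{train}\}$. It uses the same reduced-form vector $r_b = w$ and the same identities $P_A(Y - X^Tb) = w^TMA$ and $M_A(Y - X^Tb) = w^T\varepsilon$ obtained from $\Cov(\varepsilon, A) = 0$. The cross term vanishes for the same reason, and the maximizer is the same: your $S^{1/2}u$ equals $Sw/(w^TSw)^{1/2}$. You also handle the degenerate case $w^TSw = 0$ explicitly, where the paper's formula for the maximizer divides by zero.
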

\noindent See appendix \ref{app:proof-cvar-anchor-robustness} for the proof.
\Cref{thm:anchor-regression} motivates the anchor regression estimator
$$
b^\gamma_\mathrm{anchor} := \argmin_b \BE[(M_A (Y - X^T b))^2] + \gamma \BE[ (P_A (Y - X^T b))^2].
$$
Note how both the anchor penalty term $\BE[ (P_A (Y - X^T b))^2]$ and the robustness set $C_\gamma$ are constructed by taking averages over shifts with respect to the training distribution.

\subsection{CVaR anchor regression and shift robustness}
For any $\alpha \in (0, 1]$ and random variable $Z$ with c.d.f.\ $F_Z$, define $$\cvar_\alpha(Z) := 1 / \alpha \int_{1-\alpha}^1  F_Z^{-1}(u) du = \sup \{ \BE_\BQ[Z]  \mid \BQ \ll \BP \text{ is a prob. measure with } \| \dd \BQ / \dd \BP \|_\infty \leq 1/\alpha \}$$ to be the conditional value at risk.
If $Z$ is discrete, taking values $z_e$ with probabilities $p_e$, then $\cvar_\alpha(Z) = \max \{ \sum_e q_e z_e \mid \sum_e q_e = 1, 0 \leq q_e \leq p_e / \alpha \}$.
We propose to replace the expectation in the anchor penalty with the conditional value at risk.

\begin{definition}[CVaR anchor regression]
    For $\gamma \geq 0$ and $\alpha \in (0, 1]$, the CVaR anchor regression estimator is
    $$
    b^{\gamma, \alpha}_\mathrm{cvar-anchor} := \argmin_b \BE_{P_\mathrm{train}} [ (M_A(Y - X^T b))^2] + \gamma \cvar_\alpha ( (P_A (Y - X^T b))^2 ).
    $$
\end{definition}
\noindent This generalizes anchor regression, with $b_\mathrm{anchor}^\gamma = b^{\gamma, 1}_\mathrm{cvar-anchor}$.

\begin{theorem}[Robustness set of CVaR anchor regression] \label{thm:cvar-anchor-robustness}
      Fix $\alpha \in (0, 1]$ and $\gamma \ge 0$. Let
      $$
      \CQ_\alpha := \big\{ \BQ \ll \BP_\mathrm{train} \text{ a prob. measure} \ \big| \
        \| \dd \BQ / \dd \BP_\mathrm{train} \|_\infty \le 1/\alpha \big\}
      $$                              
      be the dual set of the $\mathrm{CVaR}_\alpha$ and let
      $$
      C_{\alpha, \gamma} := \bigcup_{\BQ \in \CQ_\alpha}
        \Big\{ v \ \Big| \ v v^T \preceq
          \gamma M \BE_\BQ[A A^T] M^T \Big\} .
      $$
      Then, for every $b \in \BR^{\mX}$,
      $$
      \sup_{v \in C_{\alpha, \gamma}} \BE_v\big[(Y - X^T b)^2\big]
      = \BE\big[(M_A(Y - X^T b))^2\big]
        + \gamma \mathrm{CVaR}_\alpha\big( (P_A (Y - X^T b))^2 \big) ,
      $$
      and hence $b_\mathrm{cvar-anchor}^{\gamma, \alpha}
        = \argmin_b \sup_{v \in C_{\alpha, \gamma}} \BE_v[(Y - X^T b)^2]$.
  \end{theorem}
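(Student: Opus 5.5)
We plan to reduce the statement to a pointwise computation of $\BE_v[(Y-X^Tb)^2]$ for a fixed deterministic $v$, as in the proof of \cref{thm:anchor-regression}, and then exchange two suprema. Write $R := Y - X^Tb$ and let $w^T := (-b^T, 1, 0)(\Id - B)^{-1}$, so that under \cref{eq:structural} we have $R = w^T\varepsilon + w^T M A$, and under $\BP_v$ we have $R = w^T\varepsilon + w^T v$. Since $\BE[\varepsilon]=0$ and $\varepsilon$ has the same law under $\BP_v$, we get for deterministic $v$
$$
\BE_v[R^2] = \BE[(w^T\varepsilon)^2] + (w^Tv)^2 .
$$
Because $\Cov[\varepsilon, A]=0$, $P_A R = w^T M P_A A = w^T M A$ and $M_A R = w^T \varepsilon$ (here $P_A$ uses the uncentered/second-moment projection consistent with $\BE[AA^T]$; I would check that the definition with $\Cov$ agrees, or work with the second-moment version as the paper's $C_\gamma$ suggests). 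Hence $\BE[(M_A R)^2] = \BE[(w^T\varepsilon)^2]$ and $(P_AR)^2 = (w^TMA)^2$, and the claim reduces to
$$
\sup_{v\in C_{\alpha,\gamma}} (w^Tv)^2 = \gamma \cvar_\alpha\big((w^TMA)^2\big).
$$

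For the left side, fix $\BQ\in\CQ_\alpha$ and set $\Sigma_\BQ := \gamma M\BE_\BQ[AA^T]M^T \succeq 0$. I will show $\sup\{(w^Tv)^2 : vv^T\preceq \Sigma_\BQ\} = w^T\Sigma_\BQ w$. The upper bound follows directly: $(w^Tv)^2 = w^Tvv^Tw \le w^T\Sigma_\BQ w$. For attainment, take $v = \Sigma_\BQ w / \sqrt{w^T\Sigma_\BQ w}$ (or $v=0$ if $w^T\Sigma_\BQ w = 0$). Then $vv^T\preceq\Sigma_\BQ$ by Cauchy--Schwarz in the $\Sigma_\BQ$ inner product, namely $(x^T\Sigma_\BQ w)^2\le (x^T\Sigma_\BQ x)(w^T\Sigma_\BQ w)$, and $(w^Tv)^2 = w^T\Sigma_\BQ w$. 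Since $C_{\alpha,\gamma}$ is a union over $\BQ$, the supremum becomes
$$
\sup_{\BQ\in\CQ_\alpha} \gamma\, \BE_\BQ[(w^TMA)^2] = \gamma\cvar_\alpha\big((w^TMA)^2\big)
$$
by the dual representation of CVaR stated before the definition. This requires $(w^TMA)^2$ to be integrable, which holds by finite second moments of $A$.

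Two points need care. The first is whether $v$ may be random: $\BP_v$ allows random $v$ independent of $\varepsilon$, but $C_{\alpha,\gamma}$ consists of vectors, so only deterministic $v$ is needed. The second, which I expect to be the main obstacle, is the identification $P_A R = w^TMA$. With $P_A Z = \Cov(Z,A)\Cov(A)^{-1}A$, this requires care when $\BE[A]\neq 0$, since $\Cov(A)$ and $\BE[AA^T]$ then differ and $\Cov(w^T\varepsilon, A)=0$ only kills the noise part. I would follow whatever convention the proof of \cref{thm:anchor-regression} uses (presumably $\BE[A]=0$ or an uncentered projection) so that $P_A(w^TMA)=w^TMA$ and $P_A(w^T\varepsilon)=0$; the remaining steps then go through unchanged. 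The final $\argmin$ statement is immediate, since the two objectives agree for every $b$.
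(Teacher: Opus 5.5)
Your proposal is correct and follows the paper's proof essentially step for step: write $Y - X^Tb = w^T\varepsilon + w^TMA$, bound $(w^Tv)^2 \le w^T\Sigma_\BQ w$ for each $\BQ$, attain the bound at $v = \Sigma_\BQ w/(w^T\Sigma_\BQ w)^{1/2}$, and take the supremum over $\BQ \in \CQ_\alpha$ via the dual representation of CVaR (your Cauchy--Schwarz check that $vv^T \preceq \Sigma_\BQ$ is a detail the paper leaves implicit). The obstacle you flag does not arise: with the paper's definition, $P_A(w^TMA) = w^TM\Cov(A)\Cov(A)^{-1}A = w^TMA$ and $P_A(w^T\varepsilon) = 0$ whatever $\BE[A]$ is, and the uncentered $\BE_\BQ[AA^T]$ then enters through $\BE_\BQ[(w^TMA)^2]$, exactly as in the definition of $C_{\alpha,\gamma}$.
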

\noindent See appendix \ref{app:proof-cvar-anchor-robustness} for the proof.
An analogous result holds for random $v$ with $v v^T$ replaced by $\BE_v[v v^T]$.

\subsection{Robustness set geometry and anchor stability}

\begin{proposition}[Monotonicity of the robustness sets] \label{prop:nesting}
Let $\alpha, \alpha' \in (0,1]$ and $\gamma, \gamma' \ge 0$.
If $\gamma \leq \gamma'$ and $\gamma / \alpha \leq \gamma' / \alpha'$, then $C_{\alpha, \gamma} \subseteq C_{\alpha', \gamma'}$.
\end{proposition}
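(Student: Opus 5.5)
Given $v \in C_{\alpha,\gamma}$, we take a witness $\BQ \in \CQ_\alpha$ with $v v^T \preceq \gamma M \BE_\BQ[AA^T] M^T$. We will construct a measure $\BQ' \in \CQ_{\alpha'}$ such that
$$
\gamma\, \BE_\BQ[AA^T] \preceq \gamma'\, \BE_{\BQ'}[AA^T]
$$
in the Loewner order. Conjugating by $M$ preserves this order, so $vv^T \preceq \gamma' M \BE_{\BQ'}[AA^T] M^T$, and hence $v \in C_{\alpha',\gamma'}$.

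For the construction we mix $\BQ$ with $\BP_\mathrm{train}$: set $\BQ' := \lambda \BQ + (1-\lambda)\BP_\mathrm{train}$ with $\lambda := \gamma/\gamma' \in [0,1]$. If $\gamma' = 0$, then $\gamma = 0$, and $C_{\alpha,0} = \{0\}$ is contained in every robustness set. Since $\BQ'$ is a convex combination of probability measures absolutely continuous with respect to $\BP_\mathrm{train}$, it is one as well. Its density is
$$
\frac{\dd \BQ'}{\dd \BP_\mathrm{train}} = \lambda\, \frac{\dd\BQ}{\dd \BP_\mathrm{train}} + (1-\lambda),
$$
which is bounded by $\lambda/\alpha + 1 - \lambda$. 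We need this bound to be at most $1/\alpha'$. Multiplying through by $\gamma'$, this is equivalent to
$$
\gamma/\alpha + \gamma' - \gamma \le \gamma'/\alpha'.
$$
This does not follow directly from $\gamma/\alpha \le \gamma'/\alpha'$, so the naive mixture may fail. This is the main obstacle.

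To get around it, we do not mix with $\BP_\mathrm{train}$. Instead we drop the extra mass. Since $AA^T \succeq 0$, any measure $\mu \le \BQ$ satisfies $\BE_\mu[AA^T] \preceq \BE_\BQ[AA^T]$, and adding mass only increases the matrix. So the requirement reduces to finding a probability measure $\BQ'$ with density at most $1/\alpha'$ and with $\gamma' \BQ' \ge \gamma \BQ$ as measures. That last condition gives $\gamma \BE_\BQ[AA^T] \preceq \gamma' \BE_{\BQ'}[AA^T]$ directly.

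Write $q = \dd\BQ/\dd\BP_\mathrm{train} \le 1/\alpha$. The required density $q' = \dd\BQ'/\dd\BP_\mathrm{train}$ must satisfy the pointwise lower bound
$$
q' \ge \lambda q, \qquad \text{where } \lambda q \le \frac{\gamma}{\gamma'\alpha} \le \frac{1}{\alpha'},
$$
using $\gamma/\alpha \le \gamma'/\alpha'$. The lower bound $\lambda q$ has total mass $\lambda \le 1$, and the cap $1/\alpha'$ has total mass $1/\alpha' \ge 1$. By an intermediate-value argument, $q' := \lambda q + t(1/\alpha' - \lambda q)$ with $t \in [0,1]$ chosen so that $\int q' \,\dd\BP_\mathrm{train} = 1$ works, because the integral is continuous and affine in $t$, equals $\lambda \le 1$ at $t=0$, and equals $1/\alpha' \ge 1$ at $t=1$. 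Thus $q'$ satisfies $\lambda q \le q' \le 1/\alpha'$ and integrates to one, so $\BQ' \in \CQ_{\alpha'}$. Both hypotheses are used: $\gamma \le \gamma'$ gives the mass bound $\lambda \le 1$, and $\gamma/\alpha \le \gamma'/\alpha'$ gives the pointwise cap $\lambda q \le 1/\alpha'$. Finally, $\gamma' q' \ge \gamma q$ together with $AA^T \succeq 0$ yields $\gamma \BE_\BQ[AA^T] \preceq \gamma' \BE_{\BQ'}[AA^T]$, which completes the inclusion.
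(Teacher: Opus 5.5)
Your proposal is correct and takes the same approach as the paper: rescale the witness density by $\gamma/\gamma'$ (the paper's $c$, your $\lambda$), then push it toward the cap $1/\alpha'$ with the unique weight (the paper's $\lambda$, your $t$) that restores total mass one. The paper solves for that weight explicitly, which is why it treats $c=1$ separately; your affine intermediate-value formulation avoids that special case, and the abandoned naive-mixture attempt is harmless.
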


\begin{proposition}[Convex hull for discrete anchors and small $\alpha$] \label{prop:finite-convex-hull}
Suppose $A$ is discrete and $MA$ takes values $\delta_e$ with probabilities $w_e > 0$.
For $0 < \alpha \leq \min_e w_e$ and $\gamma \geq 0$,
$$
C_{\alpha, \gamma} = \sqrt{\gamma}\,\operatorname{conv}\{\pm\delta_e : e \in \CE\}.
$$
\end{proposition}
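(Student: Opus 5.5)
The plan is to compute $C_{\alpha,\gamma}$ explicitly. Since $MA$ takes the values $\delta_e$, we have $M \BE_\BQ[AA^T] M^T = \sum_e q_e \delta_e\delta_e^T$, where $q_e = \BQ(MA=\delta_e)$. If several values of $A$ map to the same $\delta_e$, $\BQ$ can only redistribute mass among them, so only the aggregated weights matter. For $\alpha \le \min_e w_e$ the constraint $q_e \le w_e/\alpha$ is implied by $q_e \le 1$. Hence the weight vectors $(q_e)$ induced by $\BQ\in\CQ_\alpha$ are exactly the probability vectors on $\CE$: every point of the simplex is attained, e.g.\ by $\dd\BQ/\dd\BP_\mathrm{train} = q_e/w_e$ on $\{MA=\delta_e\}$. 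Therefore
$$
C_{\alpha,\gamma} = \Big\{ v \ \Big|\ \exists q \in \Delta_\CE:\ vv^T \preceq \gamma \textstyle\sum_e q_e \delta_e\delta_e^T \Big\}.
$$
It then suffices to treat $\gamma=1$ and scale, since $vv^T\preceq \gamma S$ iff $(v/\sqrt\gamma)(v/\sqrt\gamma)^T \preceq S$ for $\gamma>0$. The case $\gamma=0$ is trivial, as both sides equal $\{0\}$.

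For the inclusion $\supseteq$, take $v=\sum_e \lambda_e s_e \delta_e$ with $s_e\in\{\pm1\}$, $\lambda\in\Delta_\CE$; every point of $\operatorname{conv}\{\pm\delta_e\}$ has this form. Set $q=\lambda$. For any $x$, Cauchy--Schwarz (equivalently, Jensen) gives
$$
(x^Tv)^2 = \Big(\sum_e \lambda_e s_e x^T\delta_e\Big)^2 \le \sum_e \lambda_e (x^T\delta_e)^2 = x^T\Big(\sum_e q_e\delta_e\delta_e^T\Big)x,
$$
which is exactly $vv^T\preceq \sum_e q_e\delta_e\delta_e^T$.

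For the inclusion $\subseteq$, the main step, I argue by separation. Suppose $vv^T\preceq S_q:=\sum_e q_e\delta_e\delta_e^T$ for some $q\in\Delta_\CE$, but $v\notin K:=\operatorname{conv}\{\pm\delta_e\}$. Since $K$ is compact, convex and symmetric, there is an $x$ with $x^Tv > \max_e |x^T\delta_e| =: h_K(x)$. Then
$$
(x^Tv)^2 \le x^TS_qx = \sum_e q_e (x^T\delta_e)^2 \le h_K(x)^2 < (x^Tv)^2,
$$
a contradiction. The case $h_K(x)=0$ is covered, since the strict inequality still applies.

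I expect the only delicate points to be the reduction from measures $\BQ$ to weight vectors $q$ (correct aggregation when $A$ takes more values than $MA$) and the observation that the cap $w_e/\alpha\ge1$ is inactive. Both are routine. Note that $K$ need not be full-dimensional; the separating hyperplane argument does not require this.
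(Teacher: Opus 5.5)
Your route is essentially the paper's. You reduce $\CQ_\alpha$ to the full simplex of weights on $\CE$, since the cap $w_e/\alpha \ge 1$ is inactive. You then get $\subseteq$ from $(x^Tv)^2 \le \gamma\sum_e q_e (x^T\delta_e)^2 \le \gamma \max_e (x^T\delta_e)^2$ together with the support-function (separation) characterization of $\operatorname{conv}\{\pm\delta_e\}$. Your reduction step, including the aggregation over values of $A$, and your $\subseteq$ argument are correct.

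There is, however, a false claim in your $\supseteq$ step. Not every point of $\operatorname{conv}\{\pm\delta_e\}$ can be written as $\sum_e \lambda_e s_e \delta_e$ with $s_e \in \{\pm 1\}$ and $\lambda \in \Delta_\CE$.
\begin{itemize}
\item With a single environment this produces only $\pm\delta_1$, not the segment between them.
\item With $\delta_1 = (1,0)^T$ and $\delta_2 = (0,1)^T$ it produces only the boundary of the $\ell_1$ ball and misses the origin.
\end{itemize}
Writing a general point as $v = \sum_e (\mu_e^+ - \mu_e^-)\delta_e$ with $\mu^\pm \ge 0$ and $\sum_e(\mu_e^+ + \mu_e^-) = 1$ only gives $\lambda_e := |\mu_e^+ - \mu_e^-|$ with $\sum_e \lambda_e \le 1$. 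So as written, your argument leaves interior points of the hull uncovered.

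The repair is immediate. Take $q_e := \mu_e^+ + \mu_e^-$, so that $q \in \Delta_\CE$ and $q \ge \lambda$ componentwise. Cauchy--Schwarz then gives $\bigl(\sum_e \lambda_e s_e x^T\delta_e\bigr)^2 \le \bigl(\sum_e \lambda_e\bigr)\sum_e \lambda_e (x^T\delta_e)^2 \le \sum_e q_e (x^T\delta_e)^2$.

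Alternatively, argue as the paper does. The set $\{v \mid \exists q \in \Delta_\CE:\ vv^T \preceq \gamma \sum_e q_e \delta_e\delta_e^T\}$ is convex: apply the same Jensen inequality to $tv + (1-t)v'$ with the mixed weight vector $tq + (1-t)q'$. It contains $\pm\sqrt{\gamma}\,\delta_e$ via point masses, hence contains their convex hull. The paper asserts this convexity without proof, so with either fix your version is the more explicit of the two.
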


\begin{proposition}[Equivalence for elliptical shifts] \label{prop:elliptical}
Under the model of \cref{eq:structural}, suppose $MA$ has a centered elliptical distribution with finite second moments.
For every $\alpha \in (0,1]$, there is a constant $c_\alpha \in [1,1/\alpha]$, independent of $b$, such that
$$
\cvar_\alpha\big((P_A(Y-X^T b))^2\big)
= c_\alpha\,\BE\big[(P_A(Y-X^T b))^2\big].
$$
Thus the CVaR anchor objective at $(\gamma,\alpha)$ equals the anchor objective at $\gamma c_\alpha$, and they have the same minimizers:
$$
b_\mathrm{cvar-anchor}^{\gamma,\alpha}
= b_\mathrm{anchor}^{\gamma c_\alpha}.
$$
\end{proposition}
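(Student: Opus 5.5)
The plan is to show that $(P_A(Y-X^Tb))^2$ has, up to a $b$-dependent scale factor, a law that does not depend on $b$. Positive homogeneity of CVaR then gives the claim.

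First, write the projection explicitly. From \cref{eq:structural}, $(X,Y,H) = (\mathrm{Id}-B)^{-1}(\varepsilon + MA)$, so $Y - X^Tb = r_b^T(\mathrm{Id}-B)^{-1}(\varepsilon + MA)$ for the linear functional $r_b := (-b^T, 1, 0)^T$. Since $\Cov(\varepsilon, A)=0$, the $\varepsilon$ part has zero covariance with $A$. Hence $P_A(Y-X^Tb) = r_b^T(\mathrm{Id}-B)^{-1} P_A(MA)$. Because $MA$ is a linear function of $A$ and is centered, $P_A(MA)$ should equal $MA$ itself. Here I would use $\Cov(A)$ invertible, which is implicit in the definition of $P_A$, together with centering so that covariance and second moment coincide. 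Setting $u_b := M^T(\mathrm{Id}-B)^{-T}r_b$, we obtain $P_A(Y-X^Tb) = u_b^T A$, which can also be written as $a_b^T (MA)$ with $a_b := (\mathrm{Id}-B)^{-T}r_b$.

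Second, use ellipticity. A centered elliptical vector $Z = MA$ has the representation $Z \overset{d}{=} R\,\Lambda S$, where $S$ is uniform on the sphere and $R\ge 0$ is independent of $S$. Hence for every fixed $a$, the scalar $a^TZ$ has the same law as $\|\Lambda^T a\|\, R S_1$. This holds by rotation invariance of $S$; if $\Lambda^Ta=0$ the variable is degenerate at zero. So $(a_b^TZ)^2 \overset{d}{=} \|\Lambda^T a_b\|^2 W$ with $W := R^2S_1^2$, and the law of $W$ does not depend on $b$. Finite second moments of $MA$ give $\BE[W]<\infty$. CVaR depends only on the law and is positively homogeneous, and so is the expectation. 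Therefore
$$
\cvar_\alpha\big((P_A(Y-X^Tb))^2\big) = c_\alpha \BE\big[(P_A(Y-X^Tb))^2\big], \qquad c_\alpha := \cvar_\alpha(W)/\BE[W],
$$
when $\BE[W]>0$. If $\BE[W]=0$, then $Z=0$ almost surely, both sides vanish, and we may set $c_\alpha := 1$. If $\Lambda^Ta_b=0$, both sides are zero and the identity holds trivially.

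Third, bound $c_\alpha$. For a nonnegative integrable $W$, $\cvar_\alpha(W)\ge \BE[W]$ because $\BQ=\BP$ is feasible in the dual representation. Also $\cvar_\alpha(W)\le \|\dd\BQ/\dd\BP\|_\infty \BE[W]\le \BE[W]/\alpha$, using $W\ge 0$. This gives $c_\alpha\in[1,1/\alpha]$. Substituting into the definition of the CVaR anchor objective shows that it equals the anchor objective with parameter $\gamma c_\alpha$ pointwise in $b$. Hence the two objectives have the same argmin sets, so $b^{\gamma,\alpha}_\mathrm{cvar-anchor} = b^{\gamma c_\alpha}_\mathrm{anchor}$.

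The main obstacle is the first step: identifying $P_A(Y-X^Tb)$ with $a_b^T MA$ requires care, because $P_A$ is defined through $\Cov$ rather than second moments. This is where the centering assumption matters. The projection of $\varepsilon$ onto $A$ vanishes by uncorrelatedness. The projection of $MA$ onto $A$ equals $M\Cov(A)\Cov(A)^{-1}A = MA$, which needs $\Cov(A,A)$ as computed. If $A$ itself is not centered but $MA$ is, I would reduce to the component $MA$ directly, working with the law of $MA$ rather than $A$. The rest is the standard fact that one-dimensional projections of an elliptical law are scale multiples of a single law.
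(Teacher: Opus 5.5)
Your proposal is correct and follows essentially the same route as the paper: write $P_A(Y-X^Tb)=r_b^T MA$, use that one-dimensional projections of a centered elliptical vector are scale multiples of a single law (your $W/\BE[W]$ is the paper's $Z_0^2$), and conclude by positive homogeneity of CVaR together with $\BE[W]\le\cvar_\alpha(W)\le\BE[W]/\alpha$. One small correction: $P_A(MA)=M\Cov(A)\Cov(A)^{-1}A=MA$ holds whether or not $A$ or $MA$ is centered, because $P_A$ is defined through covariances; centering is needed instead in the elliptical step, so that $a^T MA$ has no location term and its square belongs to a pure scale family.
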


  \begin{remark}[Anchor stability] \label{rem:anchor-stability}
      Suppose invariance is attainable, that is, $\CB := \{ b \mid P_A(Y - X^T b) = 0 \text{ a.s.} \} \neq \emptyset$.
      By \citet[Lemma 2]{rothenhausler2021anchor}, this is exactly their \emph{projectability condition}.
      For $Z \ge 0$ we have $\cvar_\alpha(Z) \ge \BE[Z]$, so, for every $\alpha \in (0,1]$, $\cvar_\alpha(Z) = 0$ if and only if $Z = 0$ a.s.
      \begin{enumerate}[(i)]
        \item $\lim_{\gamma \to \infty} b_\mathrm{cvar-anchor}^{\gamma, \alpha} = \argmin_{b \in \CB} \BE[(M_A (Y - X^T b))^2] =: b_\mathrm{anchor}^{\to \infty}$ independent of $\alpha$.
        \item If moreover $b_\mathrm{anchor}^0 = b_\mathrm{anchor}^{\to \infty}$ (anchor stability), then $b_\mathrm{anchor}^0 \in \CB$, so $b_\mathrm{anchor}^0$ minimises the loss and the penalty simultaneously and hence $b_\mathrm{cvar-anchor}^{\gamma, \alpha} = b_\mathrm{anchor}^0$ for every $\alpha$ and $\gamma$.
      \end{enumerate}
      Thus, CVaR anchor regression has the same invariant limit as anchor regression. Under anchor stability, its entire path is constant for every $\alpha$.
  \end{remark}
  For discrete $A$, group CVaR minimizes the tail average of the environment-wise risks $\cvar_\alpha(\BE[(Y - X^T b)^2 \mid A])$.
  This was studied as superquantile risk minimization \citep[Appendix~F]{eastwood2022probable}.
  If the noise does not vary across environments, it coincides with CVaR anchor regression at $\gamma = 1$.

  \begin{proposition}[Equivalence to group CVaR]
  \label{prop:group-cvar}
  Suppose the structural model \eqref{eq:structural} holds and
  $$
  \BE[\varepsilon \mid A]=0,
  \qquad
  \Cov(\varepsilon \mid A)=\Sigma_\varepsilon
  \quad\text{a.s.},
  $$
  for some fixed $\Sigma_\varepsilon$.
  Then, for every $b$ and $\alpha \in (0,1]$,
  $$
  \cvar_\alpha\big(\BE[(Y-X^Tb)^2 \mid A]\big)
  =
  \BE[(M_A(Y-X^Tb))^2]
  +
  \cvar_\alpha\big((P_A(Y-X^Tb))^2\big).
  $$
  Consequently, population group CVaR and CVaR anchor regression
  with $\gamma=1$ have the same set of minimizers.
  \end{proposition}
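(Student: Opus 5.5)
The plan is to compute the conditional risk given $A$ explicitly and split it into a constant variance part plus a squared conditional mean. Write $R := Y - X^T b$. By \eqref{eq:structural}, $R = w_b^T(\mathrm{Id}-B)^{-1}(\varepsilon + MA)$, where $w_b := (-b^T, 1, 0)^T$. Set $u_b^T := w_b^T(\mathrm{Id}-B)^{-1}$, so that $R = u_b^T\varepsilon + u_b^T M A$.

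Conditioning on $A$ and using $\BE[\varepsilon\mid A]=0$ gives $\BE[R\mid A] = u_b^T M A$. This is linear in $A$, so I first identify it with $P_A R$. Because $\BE[\varepsilon\mid A]=0$ implies $\Cov(\varepsilon, A)=0$, we get $\Cov(R,A) = u_b^T M \Cov(MA\text{-part},A)$. The projection $P_A$ as defined, $\Cov(Z,A)\Cov(A)^{-1}A$, reproduces $u_b^T M A$ exactly when $A$ is centered, or when $P_A$ is read as the $L^2$ projection onto the linear span of $A$ as in \citet{rothenhausler2021anchor}. With the second-moment convention, which \cref{thm:anchor-regression} uses through $\BE[AA^T]$, this is immediate, since $\BE[u_b^T\varepsilon\, A]=0$. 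Hence $P_A R = u_b^T M A = \BE[R\mid A]$ and $M_A R = u_b^T\varepsilon$. I expect this identification to be the only delicate point, namely keeping the centering convention consistent with the earlier results.

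Next comes the conditional bias–variance decomposition:
$$
\BE[R^2\mid A] = \BE[R\mid A]^2 + \Var(R\mid A) = (P_A R)^2 + u_b^T\Sigma_\varepsilon u_b ,
$$
where the variance term uses $\Cov(\varepsilon\mid A)=\Sigma_\varepsilon$ a.s. The second term is a deterministic constant $c_b$. It equals $\BE[(M_A R)^2] = \BE[(u_b^T\varepsilon)^2] = u_b^T \Cov(\varepsilon) u_b$, since $\Cov(\varepsilon) = \BE[\Cov(\varepsilon\mid A)] = \Sigma_\varepsilon$ by the law of total covariance together with the zero conditional mean.

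Finally, I use translation equivariance of CVaR: $\cvar_\alpha(Z + c) = \cvar_\alpha(Z) + c$ for a constant $c$. This follows from the dual representation, because every $\BQ\in\CQ_\alpha$ is a probability measure and so $\BE_\BQ[Z+c] = \BE_\BQ[Z]+c$. Applying this with $Z = (P_A R)^2$ and $c = \BE[(M_A R)^2]$ gives the displayed identity. Hence the group CVaR objective coincides, for every $b$, with the CVaR anchor objective at $\gamma=1$, and the two have the same minimizers.
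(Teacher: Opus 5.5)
Your proof is correct and follows the paper's argument essentially step for step. You write $Y-X^Tb=u_b^T\varepsilon+u_b^TMA$, identify $P_A$ and $M_A$ of it with the conditional mean and the noise part, split $\BE[R^2\mid A]$ into $(P_AR)^2$ plus the constant $u_b^T\Sigma_\varepsilon u_b=\BE[(M_AR)^2]$, and conclude by translation equivariance of CVaR, additionally justifying the last two steps via the law of total covariance and the dual representation where the paper only asserts them.

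The centering issue you flag is not actually delicate. With the paper's definition $P_AZ=\Cov(Z,A)\Cov(A)^{-1}A$ one has $P_A(c^TA)=c^T\Cov(A)\Cov(A)^{-1}A=c^TA$ for any $c$, and $P_A(u_b^T\varepsilon)=0$ because $\Cov(\varepsilon,A)=0$. So $P_AR=u_b^TMA$ holds whether or not $A$ is centered.
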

  See appendix \ref{app:proof-group-cvar-equivalent} for the proof.

\section{Computation and practical considerations}
We use the plug-in estimator for the CVaR anchor regression coefficient:
$$
\hat b_\mathrm{cvar-anchor}^{\gamma, \alpha} := \argmin_b \frac{1}{n} \| \hat M_A (Y - X b) \|_2^2 + \gamma \widehat{\mathrm{CVaR}}_\alpha( (\hat P_A (Y - X b))^2 ),
$$
where $\hat P_A = A (A^T A)^{-1} A^T$ and $\hat M_A = \mathrm{Id}_n - \hat P_A$ are the projections onto the column span $A$ and its orthogonal complement, respectively.
Here, the square is componentwise and empirical CVaR assigns weight $1/n$ to each component.
For discrete environments, let $I_e$ contain the observations from environment $e$ and $n_e=|I_e|$.
Using environment indicators as anchors gives the group-wise objective with
$$
\hat\mu_e(b) := \frac{1}{n_e}\sum_{i\in I_e}(Y_i-X_i^T b),\qquad
\hat\sigma_e^2(b) := \frac{1}{n_e}\sum_{i\in I_e}(Y_i-X_i^T b-\hat\mu_e(b))^2,
\qquad \hat w_e := \frac{n_e}{n}.
$$

\subsection{Optimization}
By the Rockafellar--Uryasev representation $\cvar_\alpha(Z) = \min_{t \in \BR} \{ t + \BE[(Z - t)_+] / \alpha \}$ \citep{rockafellar2000optimization}, computing $b^{\gamma,\alpha}_\mathrm{cvar\text{-}anchor}$ amounts to minimizing
$$
\BE_{P_\mathrm{train}}[(M_A(Y - X^T b))^2] + \gamma \Big( t + \frac{1}{\alpha} \BE\big[ \big( (P_A(Y - X^T b))^2 - t \big)_+ \big] \Big)
$$
jointly over $(b, t)$.
This is convex in $(b, t)$, but objective is non-smooth in $t$ and the minimizing $t$ may be an interval.
To improve the convergence of gradient-based solvers, we suggest replacing the hinge $(\cdot)_+$ by the softplus $\tau \log(1 + \exp(\cdot / \tau))$ and annealing $\tau \to 0$ over iterations.
This is implemented in \url{github.com/mlondschien/cvar-anchor-regression}.

\subsection{Finite sample bias}     
The plug-in estimate $\hat \mu_e(b)^2$ is biased for $\mu_e(b)^2$: $\BE[\hat \mu_e(b)^2] = \mu_e(b)^2 + \sigma_e^2(b) / n_e$.
For $n_e / n = 1 / |\CE|$ constant, this effectively bounds the effective penalty at $\gamma_\mathrm{effective} = n_e$ for $\gamma \to \infty$.
This applies to both standard anchor regression and CVaR anchor regression.
However, small or noisy environments have larger upward bias in their estimated squared mean residuals.
They can therefore enter the CVaR tail even when their true mean residuals are small.
One could counteract this by using a debiased estimate $\hat \mu_e(b)^2 - \hat \sigma_e^2(b) / (n_e - 1)$ instead, but the resulting objective is no longer convex in $b$.

\subsection{Inclusion of an intercept}
An intercept can be included by either (i) centering $X$ and $y$, running the regression without an intercept, and then computing the intercept as $\bar y - \bar X^T \hat b$, or (ii) including a constant column in both $X$, $A$.
In standard anchor regression, these are equivalent.
For CVaR anchor regression, this is no longer the case.
We implement (ii) in \url{github.com/mlondschien/cvar-anchor-regression} as this corresponds to the cvar penalty over the squared group-wise mean residuals $\hat \mu_e(b)^2$ instead of the square of those relative to the overall mean residual $(\hat \mu_e(b) - \sum_e w_e \hat \mu_e(b))^2$.
The choice between (i) and (ii) remains a modeling choice.

\section{Examples}
\Cref{ex:1} presents a setup where noises are heteroscedastic and shifts are non-elliptical.
CVaR anchor regression improves upon standard anchor regression and groupDRO.
We provide two more examples below.
In \cref{ex:2}, the shifts are uniformly distributed, there is no hidden confounder, and the additive noise is homoscedastic, but the model is misspecified.
\Cref{ex:3} presents an anticausal model with homoscedastic noise and non-elliptical shifts.

\begin{example}[Misspecified model]
    \label{ex:2}
Let
$$
X = \delta_e + \varepsilon_X, \qquad Y = 0.1 X^3 + \varepsilon_Y,
\qquad \varepsilon_X, \varepsilon_Y \sim \CN(0,1).
$$
We draw $\delta_e \sim \mathrm{Unif}(0,4)$ for $1'000$ environments with $500$ observations each.
We fit a CVaR anchor regression model with intercept using environments as anchors.
\Cref{tab:misspecified} reports the risks at $\delta=0,2,4$ for varying $\gamma$ and $\alpha$, computed with $200'000$ Monte Carlo samples.
\Cref{prop:elliptical} does not apply because of the misspecification.

At each reported $\gamma$, using $\alpha=0.1$ instead of $\alpha=1$ lowers the risks at $\delta=0,4$ but raises the risk at $\delta=2$.
Reducing $\alpha$ further to $0.05$ improves the risk at $\delta=4$ at the cost of higher risks at $\delta=0,2$.
For ordinary anchor regression, increasing $\gamma$ lowers the risks at $\delta=0,2$ but raises the risk at $\delta=4$.
We also report the risks of a group CVaR model.
Even though the additive noise is homoscedastic, \cref{prop:group-cvar} does not apply because of the misspecification.
For $\alpha<1$, group CVaR has lower risk at $\delta=4$ than CVaR anchor regression with $\gamma=1$, but higher risks at $\delta=0,2$.

\end{example}
\begin{table}[H]
    \centering
    \setlength{\tabcolsep}{3pt}
    \begin{tabular}{r rrr@{\hspace{10pt}}rrr@{\hspace{10pt}}rrr@{\hspace{10pt}}rrr@{\hspace{10pt}}rrr}  
        & \multicolumn{3}{c}{$\alpha=1$} & \multicolumn{3}{c}{$\alpha=0.5$} & \multicolumn{3}{c}{$\alpha=0.2$} & \multicolumn{3}{c}{$\alpha=0.1$} & \multicolumn{3}{c}{$\alpha=0.05$} \\
        \multicolumn{2}{r}{$\delta=0$} & 2 & 4 & 0 & 2 & 4 & 0 & 2 & 4 & 0 & 2 & 4 & 0 & 2 & 4 \\
        $\gamma = 0.25$ & 6.52 & 2.54 & 16.75 & 6.14 & 2.60 & 16.90 & 5.66 & 2.88 & 16.41 & 5.31 & 3.09 & 16.21 & 5.41 & 3.15 & {15.94} \\
        0.5 & 6.20 & 2.52 & 17.16 & 5.82 & 2.57 & 17.35 & 5.47 & 2.85 & 16.71 & 5.22 & 3.05 & 16.42 & 5.34 & 3.14 & 16.02 \\
        1 & 5.84 & 2.50 & 17.65 & 5.56 & 2.55 & 17.75 & 5.28 & 2.85 & 16.95 & 5.11 & 3.02 & 16.63 & 5.34 & 3.14 & 16.03 \\
        2 & 5.53 & 2.48 & 18.10 & 5.33 & 2.54 & 18.08 & 5.16 & 2.85 & 17.09 & 5.10 & 3.02 & 16.66 & 5.30 & 3.13 & 16.11 \\
        4 & 5.30 & 2.46 & 18.44 & 5.19 & 2.53 & 18.29 & 5.13 & 2.84 & 17.16 & 5.10 & 3.02 & 16.66 & 5.28 & 3.12 & 16.16 \\
        8 & 5.16 & 2.46 & 18.66 & 5.13 & 2.52 & 18.40 & 5.08 & 2.84 & 17.21 & 5.08 & 3.01 & 16.69 & 5.28 & 3.12 & 16.16 \\
        16 & 5.09 & {2.45} & 18.79 & 5.09 & 2.52 & 18.47 & {5.05} & 2.85 & 17.24 & 5.07 & 3.01 & 16.71 & 5.27 & 3.12 & 16.17 \\
group CVaR & 5.84 & 2.50 & 17.65 & 6.33 & 3.30 & 14.67 & 7.69 & 4.34 & 12.28 & 8.67 & 4.85 & 11.50 & 9.50 & 5.08 & 11.12 \\
    \end{tabular}
    \caption{\label{tab:misspecified}Risks for the misspecified model.}
\end{table}

\begin{example}[Anticausal model]
\label{ex:3}
Let
$$
Y = \varepsilon_Y \sim \CN(0, 1), \quad
X = 2 Y \begin{pmatrix} 1 \\ 1 \end{pmatrix} + \delta_e + \varepsilon_{X}, \quad \text{for} \quad \varepsilon_X \sim \CN(0,\Id_2),
$$
where $\delta_e \sim \CN(0, \Id_2)$ under $P_1$ and $\delta_e \sim \CN((4, 0)^T, \Id_2)$ under $P_2$.
The noises and shifts are independent, and $\BP_\mathrm{train} = 0.975 P_1 + 0.025 P_2$.
We draw $975$ environments from $P_1$ and $25$ from $P_2$, with $1'000$ observations each, and fit models with intercepts using environments as anchors.
\Cref{tab:anticausal} reports the risks with the shift fixed at $v=(0,0)^T$ and $v=(4,0)^T$, computed with $200'000$ Monte Carlo samples.

At $\gamma=1$, decreasing $\alpha$ lowers the risk at $P_{(4,0)}$ but raises it at $P_{(0,0)}$.
Changing both parameters can improve both risks: moving from $(\gamma,\alpha)=(2,1)$ to $(0.25,0.1)$ lowers them from $(0.15,0.53)$ to $(0.13,0.50)$.
The noises are homoscedastic and, as predicted by \cref{prop:group-cvar}, group CVaR has approximately the same risks as CVaR anchor regression with $\gamma=1$.

\begin{table}[ht]
    \centering
    \setlength{\tabcolsep}{3pt}
    \begin{tabular}{r rr@{\hspace{10pt}}rr@{\hspace{10pt}}rr@{\hspace{10pt}}rr@{\hspace{10pt}}rr}
 & \multicolumn{2}{c}{$\alpha=1$} & \multicolumn{2}{c}{$\alpha=0.5$} & \multicolumn{2}{c}{$\alpha=0.2$} & \multicolumn{2}{c}{$\alpha=0.1$} & \multicolumn{2}{c}{$\alpha=0.05$} \\
& $P_{(0, 0)}$ & $P_{(4, 0)}$ & $P_{(0, 0)}$ & $P_{(4, 0)}$ & $P_{(0, 0)}$ & $P_{(4, 0)}$ & $P_{(0, 0)}$ & $P_{(4, 0)}$ & $P_{(0, 0)}$ & $P_{(4, 0)}$ \\
        $\gamma=0.25$ & 0.11 & 0.78 & 0.11 & 0.72 & 0.12 & 0.60 & 0.13 & 0.50 & 0.14 & 0.45 \\
        0.5 & 0.11 & 0.71 & 0.12 & 0.63 & 0.14 & 0.52 & 0.16 & 0.43 & 0.18 & 0.38 \\
        1 & 0.12 & 0.63 & 0.15 & 0.54 & 0.19 & 0.45 & 0.23 & 0.41 & 0.26 & 0.40 \\
        2 & 0.15 & 0.53 & 0.21 & 0.47 & 0.30 & 0.45 & 0.36 & 0.45 & 0.41 & 0.48 \\
        4 & 0.22 & 0.47 & 0.33 & 0.47 & 0.46 & 0.53 & 0.53 & 0.57 & 0.60 & 0.62 \\
        8 & 0.34 & 0.48 & 0.50 & 0.56 & 0.64 & 0.66 & 0.70 & 0.71 & 0.75 & 0.76 \\
        16 & 0.52 & 0.58 & 0.67 & 0.69 & 0.78 & 0.79 & 0.83 & 0.83 & 0.86 & 0.86 \\
        32 & 0.69 & 0.71 & 0.81 & 0.81 & 0.88 & 0.88 & 0.91 & 0.91 & 0.92 & 0.92 \\
        group CVaR & 0.12 & 0.63 & 0.15 & 0.54 & 0.19 & 0.45 & 0.23 & 0.40 & 0.26 & 0.39 \\
    \end{tabular}
    \caption{Risks for the anticausal model for varying $\gamma$ and $\alpha$.}
    \label{tab:anticausal}
\end{table}

\end{example}

\section{NYC taxi dataset applications}

\label{sec:applications}
We predict $Y = \log(\text{total amount})$ of a New York City yellow taxi trip from its log distance, log duration, their squares and product, the passenger count, two harmonics of the pickup hour, and dummies for weekday, rate code and payment type.
We use the discrete pickup zone as the anchor, giving $|\CE| = 242$ environments. We fit on $300'000$ trips from July 2019 and evaluate on $202'790$ trips from April 2020 during Covid.
\Cref{fig:taxi} shows the ranked zone-wise risks and their mean and $\cvar_{0.1}$, weighted by April trip counts.
At similar mean risk, CVaR anchor regression with $\alpha=0.1, \gamma=1$ has about $14\%$ lower $\cvar_{0.1}$ than anchor regression with $\gamma=16$.

\begin{figure}[H]
  \centering
  \includegraphics[width=0.8\textwidth]{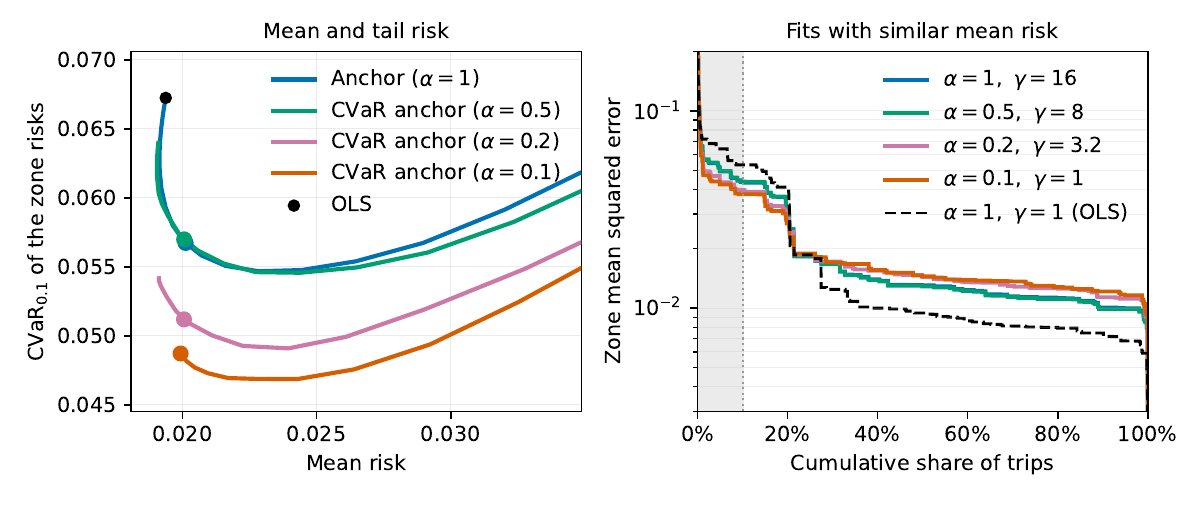}
  \caption{\label{fig:taxi}Zone-wise test risks in April 2020 after fitting on July 2019 trips. Left: $\cvar_{0.1}$ against mean zone risk as $\gamma$ increases from $1$, for $\alpha\in\{1,0.5,0.2,0.1\}$. Coloured points highlight fits with similar April mean risks. Right: their zone risks, ranked separately from largest to smallest, with OLS as a reference.}
\end{figure}

\section{Discussion and future work}

\subsection{Higher variance of CVaR anchor regression}
For elliptical shifts, population CVaR anchor regression is equivalent to ordinary anchor regression after rescaling $\gamma$.
In finite samples, however, its penalty emphasizes fewer environments, which can increase sampling variability without providing additional population robustness.

\subsection{Other divergence penalties}
CVaR is a DRO objective with $f_\alpha(t)=0$ for $0\leq t\leq 1/\alpha$ and $f_\alpha(t)=\infty$ otherwise.
One could replace $\cvar_\alpha((P_A(Y-X^T b))^2)$ by $\sup_{\BQ \ll \BP_\mathrm{train}:\, d_f(\BQ\|\BP_\mathrm{train})\leq\rho} \BE_\BQ[(P_A(Y-X^T b))^2]$ for other $f$-divergences $d_f$.
This preserves convexity of the linear estimator and \cref{thm:cvar-anchor-robustness} transfers.
We focus on CVaR here due to its interpretability as a tail average.

\subsection{Nonlinear models}
The anchor loss can be used with a nonlinear model $f_\theta$ by replacing $Y - X^T b$ with $r_\theta := Y - f_\theta(X)$.
When the resulting optimization problem is non-convex and algorithms only find local minima, one needs to choose between the worst-case reweighting formulation and the Rockafellar--Uryasev formulation.
For gradient boosting models, as in anchor boosting \citep{buhlmann2020invariance,londschien2025domain}, worst-case reweighting appears more natural, updating the weights $q_e$ in each boosting iteration.
For neural networks the Rockafellar--Uryasev formulation, with a smooth softplus approximation of the hinge function, appears more natural, jointly optimizing $t$ and the network parameters by gradient descent.
Beyond the squared loss, \citet{kook2022distributional} extend anchor regression to ordered and censored responses by replacing least-squares residuals with score residuals.
Applying $\cvar_\alpha$ to that penalty would give a distributional analogue of our estimator.
These extensions concern the estimation objective. The shift-robustness guarantees of \cref{thm:cvar-anchor-robustness} requires a separate analysis.

\bibliography{references}

\begin{thebibliography}{}

\bibitem[\protect\citeauthoryear{Agarwal and Zhang}{Agarwal and
  Zhang}{2022}]{agarwal2022minimax}
Agarwal, A. and T.~Zhang (2022).
\newblock Minimax regret optimization for robust machine learning under
  distribution shift.
\newblock In {\em Proceedings of the 35th Conference on Learning Theory},
  Volume 178 of {\em Proceedings of Machine Learning Research}, pp.\
  2704--2729. PMLR.

\bibitem[\protect\citeauthoryear{Ben-Tal, den Hertog, De~Waegenaere, Melenberg,
  and Rennen}{Ben-Tal et~al.}{2013}]{bental2013robust}
Ben-Tal, A., D.~den Hertog, A.~De~Waegenaere, B.~Melenberg, and G.~Rennen
  (2013).
\newblock Robust solutions of optimization problems affected by uncertain
  probabilities.
\newblock {\em Management Science\/}~{\em 59\/}(2), 341--357.

\bibitem[\protect\citeauthoryear{B{\"u}hlmann}{B{\"u}hlmann}{2020}]{buhlmann2020invariance}
B{\"u}hlmann, P. (2020).
\newblock Invariance, causality and robustness.
\newblock {\em Statistical Science\/}~{\em 35\/}(3), 404--426.

\bibitem[\protect\citeauthoryear{Duchi and Namkoong}{Duchi and
  Namkoong}{2021}]{duchi2021learning}
Duchi, J.~C. and H.~Namkoong (2021).
\newblock Learning models with uniform performance via distributionally robust
  optimization.
\newblock {\em The Annals of Statistics\/}~{\em 49\/}(3), 1378--1406.

\bibitem[\protect\citeauthoryear{Eastwood, Robey, Singh, von K{\"u}gelgen,
  Hassani, Pappas, and Sch{\"o}lkopf}{Eastwood
  et~al.}{2022}]{eastwood2022probable}
Eastwood, C., A.~Robey, S.~Singh, J.~von K{\"u}gelgen, H.~Hassani, G.~J.
  Pappas, and B.~Sch{\"o}lkopf (2022).
\newblock Probable domain generalization via quantile risk minimization.
\newblock In {\em Advances in Neural Information Processing Systems},
  Volume~35, pp.\  17340--17358. Curran Associates, Inc.

\bibitem[\protect\citeauthoryear{Koh, Sagawa, Marklund, Xie, Zhang,
  Balsubramani, Hu, Yasunaga, Phillips, Gao, Lee, David, Stavness, Guo,
  Earnshaw, Haque, Beery, Leskovec, Kundaje, Pierson, Levine, Finn, and
  Liang}{Koh et~al.}{2021}]{koh2021wilds}
Koh, P.~W., S.~Sagawa, H.~Marklund, S.~M. Xie, M.~Zhang, A.~Balsubramani,
  W.~Hu, M.~Yasunaga, R.~L. Phillips, I.~Gao, T.~Lee, E.~David, I.~Stavness,
  W.~Guo, B.~Earnshaw, I.~Haque, S.~M. Beery, J.~Leskovec, A.~Kundaje,
  E.~Pierson, S.~Levine, C.~Finn, and P.~Liang (2021).
\newblock {WILDS}: A benchmark of in-the-wild distribution shifts.
\newblock In {\em Proceedings of the 38th International Conference on Machine
  Learning}, Volume 139 of {\em Proceedings of Machine Learning Research}, pp.\
   5637--5664. PMLR.

\bibitem[\protect\citeauthoryear{Kook, Sick, and B{\"u}hlmann}{Kook
  et~al.}{2022}]{kook2022distributional}
Kook, L., B.~Sick, and P.~B{\"u}hlmann (2022).
\newblock Distributional anchor regression.
\newblock {\em Statistics and Computing\/}~{\em 32}, 39.

\bibitem[\protect\citeauthoryear{Krueger, Caballero, Jacobsen, Zhang, Binas,
  Zhang, Le~Priol, and Courville}{Krueger et~al.}{2021}]{krueger2021out}
Krueger, D., E.~Caballero, J.-H. Jacobsen, A.~Zhang, J.~Binas, D.~Zhang,
  R.~Le~Priol, and A.~Courville (2021).
\newblock Out-of-distribution generalization via risk extrapolation ({REx}).
\newblock In {\em Proceedings of the 38th International Conference on Machine
  Learning}, Volume 139 of {\em Proceedings of Machine Learning Research}, pp.\
   5815--5826. PMLR.

\bibitem[\protect\citeauthoryear{Londschien, Burger, R{\"a}tsch, and
  B{\"u}hlmann}{Londschien et~al.}{2025}]{londschien2025domain}
Londschien, M., M.~Burger, G.~R{\"a}tsch, and P.~B{\"u}hlmann (2025).
\newblock Domain generalization and adaptation in intensive care with anchor
  regression.
\newblock arXiv:2507.21783.

\bibitem[\protect\citeauthoryear{Meinshausen and B{\"u}hlmann}{Meinshausen and
  B{\"u}hlmann}{2015}]{meinshausen2015maximin}
Meinshausen, N. and P.~B{\"u}hlmann (2015).
\newblock Maximin effects in inhomogeneous large-scale data.
\newblock {\em The Annals of Statistics\/}~{\em 43\/}(4), 1801--1830.

\bibitem[\protect\citeauthoryear{Mo, Tang, Xue, Liu, and Zhu}{Mo
  et~al.}{2024}]{mo2024minimax}
Mo, W., W.~Tang, S.~Xue, Y.~Liu, and J.~Zhu (2024).
\newblock Minimax regret learning for data with heterogeneous subgroups.
\newblock arXiv:2405.01709v1.

\bibitem[\protect\citeauthoryear{Oren, Sagawa, Hashimoto, and Liang}{Oren
  et~al.}{2019}]{oren2019distributionally}
Oren, Y., S.~Sagawa, T.~B. Hashimoto, and P.~Liang (2019).
\newblock Distributionally robust language modeling.
\newblock In {\em Proceedings of the 2019 Conference on Empirical Methods in
  Natural Language Processing and the 9th International Joint Conference on
  Natural Language Processing ({EMNLP-IJCNLP})}, pp.\  4227--4237. Association
  for Computational Linguistics.

\bibitem[\protect\citeauthoryear{Rockafellar and Uryasev}{Rockafellar and
  Uryasev}{2000}]{rockafellar2000optimization}
Rockafellar, R.~T. and S.~Uryasev (2000).
\newblock Optimization of conditional value-at-risk.
\newblock {\em Journal of Risk\/}~{\em 2\/}(3), 21--41.

\bibitem[\protect\citeauthoryear{Rothenh{\"a}usler, Meinshausen, B{\"u}hlmann,
  and Peters}{Rothenh{\"a}usler et~al.}{2021}]{rothenhausler2021anchor}
Rothenh{\"a}usler, D., N.~Meinshausen, P.~B{\"u}hlmann, and J.~Peters (2021).
\newblock Anchor regression: Heterogeneous data meet causality.
\newblock {\em Journal of the Royal Statistical Society Series B: Statistical
  Methodology\/}~{\em 83\/}(2), 215--246.

\bibitem[\protect\citeauthoryear{Sagawa, Koh, Hashimoto, and Liang}{Sagawa
  et~al.}{2020}]{sagawa2020distributionally}
Sagawa, S., P.~W. Koh, T.~B. Hashimoto, and P.~Liang (2020).
\newblock Distributionally robust neural networks for group shifts: On the
  importance of regularization for worst-case generalization.
\newblock In {\em International Conference on Learning Representations}.

\bibitem[\protect\citeauthoryear{Wang, B{\"u}hlmann, and Guo}{Wang
  et~al.}{2023}]{wang2023distributionally}
Wang, Z., P.~B{\"u}hlmann, and Z.~Guo (2023).
\newblock Distributionally robust machine learning with multi-source data.
\newblock arXiv:2309.02211v1.

\end{thebibliography}

\appendix

\section{Proofs}

\subsection{Proof of \Cref{thm:anchor-regression} and \Cref{thm:cvar-anchor-robustness}} \label{app:proof-cvar-anchor-robustness}
We prove \Cref{thm:cvar-anchor-robustness}, which implies \Cref{thm:anchor-regression} as a special case for $\alpha = 1$.
The proof below is for random $v$ with $\BE[v v^T] \preceq \gamma M \BE_\BQ[A A^T] M^T$ for some $\BQ \in \CQ_\alpha$.
  \begin{proof}
    Write $r_b := ((\Id - B)^{-1})^T (-b^T, 1, 0)^T$ so $Y - X^T b = r_b^T(\varepsilon + v)$ under $\BP_v$ and $Y - X^T b = r_b^T(\varepsilon + MA)$ under $\BP_\mathrm{train}$.
      As $\Cov(\varepsilon, A) = 0 \Rightarrow P_A \varepsilon = 0$, we have
      \begin{equation}
        \label{eq:thm-cvar-robustness-1}        
        P_A (Y - X^T b) = P_A r_b^T \varepsilon + P_A r_b^T M A = r_b^T M A \quad \text{and} \quad M_A (Y - X^T b) = r_b^T \varepsilon.
    \end{equation}
      Under $\BP_v$, the shift $v$ is independent of $\varepsilon$ with $\BE[\varepsilon] = 0$, so $\BE_v[\varepsilon v] = 0$ and
      \begin{equation}
        \label{eq:thm-cvar-robustness-2}    
        \BE_v [ (Y - X^T b)^2] = \BE_v[(r_b^T \varepsilon)^2] + \BE_v[(r_b^T v)^2].
    \end{equation}
      For $\BQ \in \CQ_\alpha$ and $v$ satisfying $\BE[v v^T] \preceq \gamma M \BE_\BQ[A A^T] M^T =: K \succeq 0$, we have
    \begin{align}\label{eq:thm-cvar-robustness-3}
        \BE_v [ (r_b^T v)^2 ] = \BE_v [ r_b^T v v^T r_b ] \leq r_b^T K r_b
        = \gamma  \BE_\BQ[(P_A (Y - X^T b))^2].
    \end{align}
    Combining \cref{eq:thm-cvar-robustness-1,eq:thm-cvar-robustness-2,eq:thm-cvar-robustness-3} gives
    \begin{align*}
        \BE_v [ (Y - X^T b)^2] = \BE_v[(r_b^T \varepsilon)^2] + \BE_v[(r_b^T v)^2] \leq \BE[ (M_A(Y - X^T b))^2] + \gamma \BE_\BQ[(P_A (Y - X^T b))^2]
    \end{align*}
    and taking the supremum over $\BQ \in \CQ_\alpha$ and $\{v \mid \BE_v[v v^T] \preceq \gamma M \BE_\BQ[A A^T] M^T\}$ gives
    \begin{equation}
        \label{eq:thm-cvar-robustness-4}
        \sup_{v \in C_{\alpha, \gamma}} \BE_v [ (Y - X^T b)^2] \leq \BE[ (M_A(Y - X^T b))^2] + \gamma \cvar_\alpha((P_A (Y - X^T b))^2).
    \end{equation}
    Finally, equality in \cref{eq:thm-cvar-robustness-3} is attained for deterministic $v' = K r_b / (r_b^T K r_b)^{1/2}$ and so the inequality in \cref{eq:thm-cvar-robustness-4} is an equality.
  \end{proof}
\subsection{Proof of \Cref{prop:nesting}}
\begin{proof}
If $\gamma' = 0$, both sets are equal to $\{0\}$.
Otherwise, let $v \in C_{\alpha, \gamma}$, witnessed by $\BQ \in \CQ_\alpha$ with density $h := \dd\BQ / \dd\BP_\mathrm{train} \leq 1/\alpha$, and set $c := \gamma / \gamma' \leq 1$.
Then $0 \leq ch \leq 1/\alpha'$ as $\gamma/\alpha \leq \gamma' /\alpha'$ and $\BE[ch] = \BE_\BQ[c] =c.$
If $c=1$, set $\lambda=0$. Else, let $\lambda := (1 - c)/(1/\alpha' - c) \in [0, 1]$ and define $h' := ch + \lambda(1/\alpha' - ch)$.
Then $E[h'] = c + \lambda(1/\alpha' - c) = 1$, so $h'$ defines a probability measure $\BQ'.$
As it's a convex combination of $ch$ and $1 / \alpha'$, we have $\BQ' \in \CQ_{\alpha'}$.
As $h' \geq ch \Rightarrow h' \gamma' \geq \gamma h$, we have
$$
vv^T \preceq \gamma M \BE_\BQ[AA^T] M^T
\preceq \gamma' M \BE_{\BQ'}[AA^T] M^T.
$$
Hence $v \in C_{\alpha', \gamma'}$.
\end{proof}
\subsection{Proof of \Cref{prop:finite-convex-hull}}
\begin{proof}
Since $w_e/\alpha \geq 1$, every probability vector $(q_e)_{e \in \CE}$ is admissible, so
$$
C_{\alpha, \gamma} = \bigcup_{q_e \geq 0,\,\sum_e q_e = 1}
\Big\{v : vv^T \preceq \gamma\sum_e q_e\delta_e\delta_e^T\Big\}.
$$
This set is convex and contains $\pm\sqrt{\gamma}\delta_e$ and thus also their convex hull.
Conversely, for every $v \in C_{\alpha, \gamma}$, there is some $(q_e)_{e \in \CE}$ such that
  $vv^T \preceq \gamma \sum_e q_e \delta_e \delta_e^T$.
  Thus, for every $u \in \BR^{\mX + \mH + 1}$,
  $$
  (u^T v)^2
  \leq \gamma \sum_e q_e (u^T \delta_e)^2
  \leq \gamma \max_e |u^T \delta_e|^2.
  $$
  Taking square roots gives $|u^T v| \leq \sqrt{\gamma}\max_e |u^T \delta_e|$.
  This characterizes the convex hull of $\{\pm\sqrt{\gamma}\delta_e : e \in \CE\}$ and so $v$ lies in it.
\end{proof}
\subsection{Proof of \Cref{prop:elliptical}}
\begin{proof}
Write $\delta := MA$ and $\Sigma_\delta := \BE[\delta\delta^T]$.
If $\delta=0$ almost surely, both penalties vanish and any $c_\alpha \in [1,1/\alpha]$ works.
Otherwise, by ellipticity,
$
u^T\delta \overset{d}{=} \sqrt{u^T\Sigma_\delta u}\,Z_0$  for some $Z_0$ with $\BE[Z_0^2]=1$ and law independent of $u$.
Under \cref{eq:structural}, $P_A(Y-X^T b)=r_b^T\delta$ with
$r_b=((\Id-B)^{-1})^T(-b^T,1,0)^T$.
By positive homogeneity of CVaR,
$$
\cvar_\alpha\big((r_b^T\delta)^2\big)
= (r_b^T\Sigma_\delta r_b)\,\cvar_\alpha(Z_0^2).
$$
Set $c_\alpha:=\cvar_\alpha(Z_0^2)$.
The bounds follow from $\BE[Z_0^2] \leq \cvar_\alpha(Z_0^2) \leq \BE[Z_0^2]/\alpha$.
\end{proof}
\subsection{Proof of \Cref{prop:group-cvar}}
\label{app:proof-group-cvar-equivalent}
  \begin{proof}
  Let $r_b := Y-X^Tb$ and define
  $$
  v_b := (\mathrm{Id}-B)^{-T}
  \begin{pmatrix}
  -b \\ 1 \\ 0_{\mH}
  \end{pmatrix},
  \qquad
  u_b := M^T v_b.
  $$
  The structural model gives
  $r_b = u_b^T A + v_b^T\varepsilon$.
  Since $\BE[\varepsilon \mid A]=0$, we have
  $\BE[A\varepsilon^T]=0$, so $v_b^T\varepsilon$ is orthogonal
  to the linear span of $A$. Thus
  $$
  P_A r_b = u_b^T A = \BE[r_b \mid A],
  \qquad
  M_A r_b = v_b^T\varepsilon.
  $$
  Moreover,
  $$
  \Var(r_b \mid A)
  = v_b^T\Sigma_\varepsilon v_b
  = \BE[(M_A r_b)^2].
  $$
  Hence
  $$
  \BE[r_b^2 \mid A]
  = (P_A r_b)^2 + \BE[(M_A r_b)^2].
  $$
  Applying translation equivariance of CVaR proves the identity.
  \end{proof}
\end{document}